\newtheorem {theorem} {Theorem}
\begin{document}
\title{Hyperbolae are the locus of constant angle difference}
\author{Herman Haverkort and Rolf Klein\\Department of Computer Science, University of Bonn}
\date{1 December 2021}
\maketitle

\begin{abstract}
\noindent
Given two points $A,B$ in the plane, the locus of all points $P$ for which the angles at $A$ and $B$ in the triangle
$A,B,P$ have a constant \emph{sum} is a circular arc, by Thales' theorem.  
We show that the \emph{difference} of these angles is kept a constant by points $P$ on a hyperbola
(albeit with foci different from $A$ and $B$).
Whereas hyperbolae are well-known to maintain a constant difference between the distances to their foci, the above angle property seems not to be widely known. The question was motivated by recent work by Alegr\protect{\'{i}}a et al. \cite{ampssss-vdrra-21}  and De Berg et al.~\cite{bghh-vdrdc-17} on Voronoi diagrams of turning rays.

\addvspace\baselineskip
\noindent\textbf{Keywords:} Voronoi diagram of rotating rays, hyperbola, constant angle difference
\end{abstract}

\section{Rotating ray Voronoi diagrams}\label{intro-sect}
Quite recently, Alegr\protect{\'{i}}a et al. \cite{ampssss-vdrra-21} presented a novel type of Voronoi diagrams, where each point site $p$ is associated with 
a fixed ray $\phi(p)$ emanating from $p$, and the distance from $p$ to some point $z\not=p$ in the plane is defined as the \emph{counterclockwise} angle between $\phi(p)$ and the ray from $p$ to $z$. In general, the curve separating the Voronoi regions of two sites $p_1,p_2$ consists of a circular arc connecting the sites, plus the 
two rays $\phi(p_1)$ and $\phi(p_2)$. However, the points on ray $\phi(p_1)$ have distance zero from $p_1$ but a positive distance  from $p_2$ (unless $p_2 \in \phi(p_1)$), and the same holds for $p_2$.

This contrasts with the ``angle-only'' version of the Voronoi diagrams with rotational distance costs as proposed by De Berg et al.~\cite{bghh-vdrdc-17}. In these diagrams, the distance from $p$ to some point $z\not=p$ is defined as the 
the \emph{minimum} of the clockwise and the  counterclockwise angles between $\phi(p)$ and the ray $\vec{pz}$.
By this definition, the distance measure becomes continuous even along the rays.  
Now the locus of all points of identical distance from $p_1$ and $p_2$ does separate the Voronoi regions of two sites. 
Due to the minimization in the distance definition, regions may be disconnected; see Figure \ref{sym-fig}. Archimedian spirals  with polar coordinates $(\alpha,\alpha)$ around their respective centers at $p$ and $q$ help to visualize angles as lengths.
\begin{figure}
\begin{center}
\includegraphics[scale=0.6]{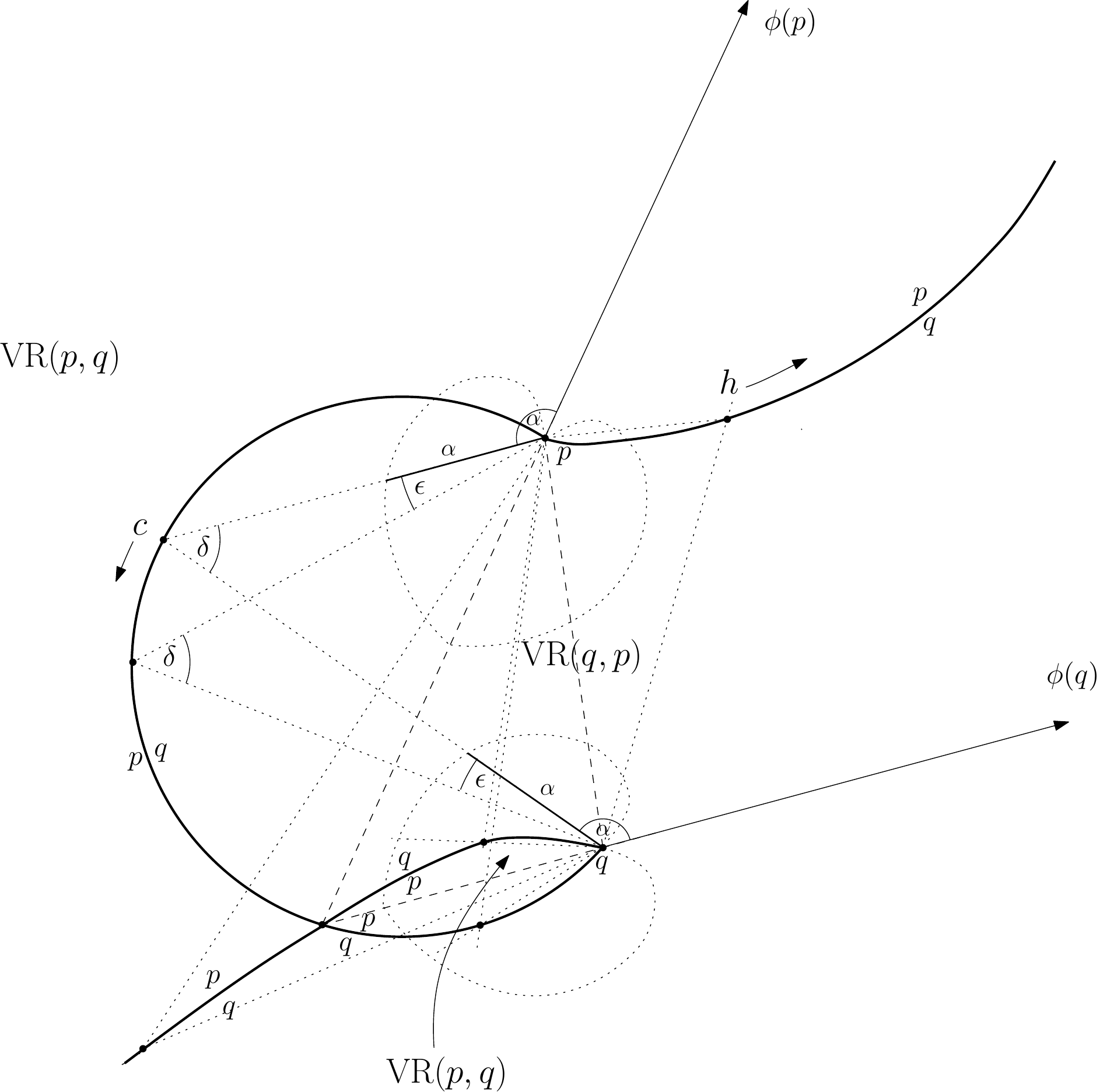}
\caption{A bisector of two points $p,q$ in the minimum angle distance. The Voronoi region $\mbox{VR}(p,q)$ of $p$ is disconnected.}
\label{sym-fig}
\end{center}
\end{figure}
De Berg et al.\ did not describe what shape bisectors have in this setting; they only observed that a bisector cannot intersect any line more than three times, and that it can divide the plane into up to four faces, never more. We observe that, as in the unsymmetric case, the bisector can contain a circular segment. Indeed, as bisector point $c$ moves downwards along the bisector, the sum of the angles at $p$ and $q$ in the triangle $p,c,q$ must remain constant.
But when point $h$ moves to the right, the \emph{difference} of the angles at $p$ and $q$ in the triangle $p,h,q$ must stay constant.
What curve results from this requirement?

\break

\section{The hyperbola as locus of constant angle difference}\label{hyper-sect}

We can prove the following.

\begin{theorem}
Let $P,Q,C$ be a triangle whose angles at vertices $P$ and $Q$ have  difference $\mu - \nu$. Then the locus of all points 
$H$ to which $C$ can be moved without changing $\mu-\nu$ is a hyperbola.
\end{theorem}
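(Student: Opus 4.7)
The plan is to introduce Cartesian coordinates, write $\tan\mu$ and $\tan\nu$ as rational functions of the coordinates of $H$, apply the tangent-subtraction identity, and recognise the resulting implicit equation as defining a hyperbola.

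First I would place $P=(-a,0)$ and $Q=(a,0)$ where $2a=|PQ|$, and write $H=(x,y)$, taking $y>0$ without loss of generality (the other side follows by reflection across $PQ$). Because $\vec{PQ}$ points along the positive $x$-axis and $\vec{QP}$ along the negative, direct inspection gives
\[\tan\mu=\frac{y}{x+a},\qquad \tan\nu=\frac{y}{a-x},\]
and these identities remain valid whether the corresponding angle is acute or obtuse, since $\tan$ has period~$\pi$. Setting $k=\tan(\mu-\nu)$ and substituting into the identity $\tan(\mu-\nu)=(\tan\mu-\tan\nu)/(1+\tan\mu\tan\nu)$, I clear denominators to reach the single quadratic
\[kx^2-2xy-ky^2-ka^2=0.\]

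The $x^2$-, $xy$- and $y^2$-coefficients are $k$, $-2$ and $-k$, so the classifying discriminant of the quadratic part is $(-2)^2-4(k)(-k)=4+4k^2>0$, forcing the conic to be of hyperbolic type. Non-degeneracy follows from the determinant of the full $3\times 3$ conic matrix, which equals $ka^2(1+k^2)$ and is non-zero whenever $k\neq 0$. Hence, for $\mu\neq\nu$, the locus of $H$ lies on a non-degenerate hyperbola. The limit case $\mu=\nu$ ($k=0$) reduces the equation to $xy=0$; its relevant component is the perpendicular bisector of $PQ$, which is the degenerate limit of the family.

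The main obstacle I anticipate is bookkeeping with branches. Since $\tan$ identifies $\mu-\nu$ only modulo~$\pi$, the conic above also contains spurious points where the angle difference equals $\mu-\nu\pm\pi$. But $\mu+\nu<\pi$ in any triangle, so $\mu-\nu\in(-\pi,\pi)$ and only two values are compatible with a given $k$; these correspond to the two branches of the hyperbola, exchanged by swapping the roles of $P$ and $Q$. The true locus is therefore a single branch, identified by the sign of $\mu-\nu$, and no further computation is required to establish the theorem.
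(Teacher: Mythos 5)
Your argument is correct and reaches the right conclusion, but it takes a genuinely different route from the paper. The paper first \emph{rotates} the configuration so that the rays $CP$ and $CQ$ make angles $-\alpha$ and $+\alpha$ with the $X$-axis; in those coordinates ``both angles grow by the same $\epsilon$'' becomes ``$PH$ and $QH$ have slopes of equal absolute value'', and equating the two ratios gives $y=h/x$ in one line, so the hyperbola appears already in asymptote-aligned (rectangular) form and no conic classification is needed. You instead fix $PQ$ on a coordinate axis, encode the condition through $\tan(\mu-\nu)$ and the subtraction formula, and classify the resulting conic. Your computations check out: $\tan(\mu-\nu)=-2xy/(a^2-x^2+y^2)$, the discriminant $4+4k^2$ is positive, and the $3\times3$ determinant is $ka^2(1+k^2)$. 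Your version buys two facts the paper leaves implicit --- the hyperbola is rectangular (the quadratic part has zero trace) and it passes through $P$ and $Q$ themselves --- and your discussion of spurious solutions of the tangent equation is more careful than the paper, which ignores the issue entirely. Two small blemishes remain. First, $k=\tan(\mu-\nu)$ is undefined when $\mu-\nu=\pm\pi/2$; the locus there is $x^2-y^2=a^2$, still a rectangular hyperbola, and the gap disappears if you write the condition homogeneously as $\sin(\mu-\nu)\,(a^2-x^2+y^2)=-2xy\cos(\mu-\nu)$ from the start. Second, the two arcs of your conic in the upper half-plane carry angle differences $\delta$ and $\delta-\pi$, not $\delta$ and $-\delta$: swapping $P$ and $Q$ amounts to reflecting in the perpendicular bisector of $PQ$ and sends the locus of $\mu-\nu=\delta$ to that of $\mu-\nu=-\delta$, which lies on the mirror-image conic, not on the other branch of yours. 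Neither point affects the theorem, and the correct statement --- that exactly one of the two arcs realizes the prescribed value of $\mu-\nu$ --- follows from your continuity/locally-constant observation just as you indicate.
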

\begin{proof}
First, we rotate the triangle such that its edges $PC$ and $QC$ make angles $-\alpha$ and $\alpha$ with the $X$-axis, respectively; see Figure \ref{hyper-fig}. Then we translate the triangle to make the origin the center of $PQ$, and finally scale it so that $P=(h,1)$ and $Q=(-h,-1)$ for some $h$.

Now suppose $C$ moves to some point $H=(x,y)$ such that the angles at $P$ and $Q$ become larger %
than $\mu$ and $\nu$ 
by the same amount, $\epsilon$. Since the angles of $PH$ and $QH$ with the $X$-axis still have identical absolute values, we 
have
\[
    \frac{|HT|}{|PT|} = \frac{|SH|}{|QS|}.
\]
Substituting coordinates results in
\[
    \frac{1-y}{x-h} = \frac{1+y}{x+h},
\]
which solves to $y = \frac{h}{x}$, the equation of a hyperbola.
\end{proof}

\begin{figure}
\begin{center}
\includegraphics[scale=0.8]{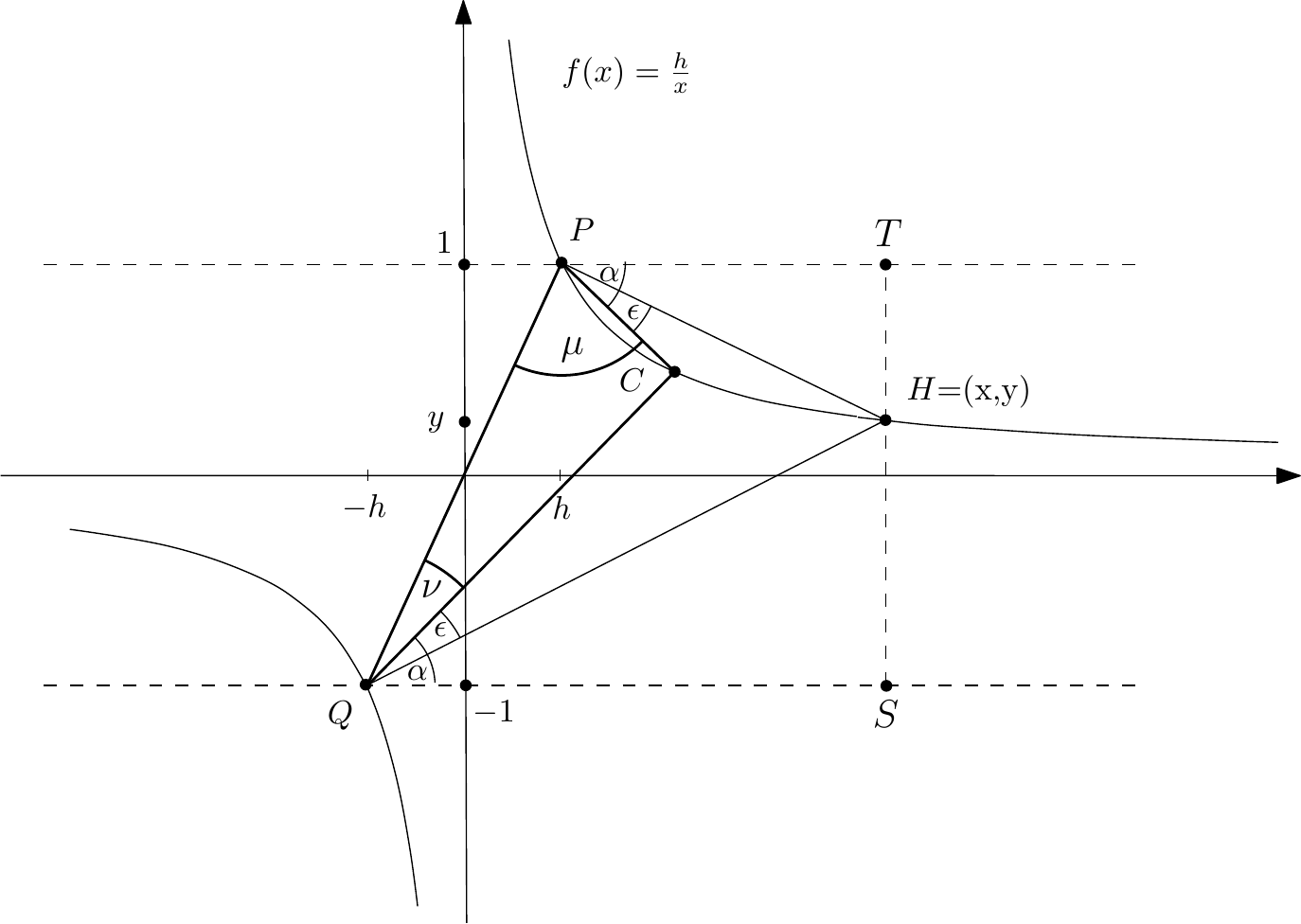}
\caption{A hyperbola maintaining constant angle difference at $P$ and $Q$.}
\label{hyper-fig}
\end{center}
\end{figure}
%


 \end{document}